
\documentclass[letterpaper, 10 pt, conference]{ieeeconf}  

\IEEEoverridecommandlockouts                              

\usepackage{graphicx} 
\newtheorem{theorem}{Theorem}

\newtheorem{assumption}{Assumption}
\newtheorem{lemma}[theorem]{Lemma}
\newtheorem{definition}{Definition}
\usepackage{mathtools}
\usepackage{semantic}
\usepackage{algorithm}
\usepackage{algpseudocode}
\usepackage[thinc]{esdiff}
\usepackage{xcolor}
\usepackage{amssymb}
\usepackage{bm}
\usepackage{hyperref}
\usepackage{url}


\definecolor{darkgreen}{HTML}{006400}

\newcommand{\x}{\mathbf{x}}
\newcommand{\y}{\mathbf{y}}

\newcommand{\s}{\mathbf{s}}

\overrideIEEEmargins


\title{\LARGE \bf
Balancing Fairness and Efficiency in Energy Resource Allocations}



\author{Jiayi Li, Matthew Motoki and Baosen Zhang
\thanks{The authors are with the department of Electrical and Computer Engineering at the University of Washington. Emails: \{ljy9712,mmotoki,zhangbao\}@uw.edu}
\thanks{The authors are partially supported by NSF ECCS-1942326 and NSF ECCS-2153937.}%
}

\begin{document}

\maketitle
\thispagestyle{empty}
\pagestyle{empty}

\begin{abstract}
Bringing fairness to energy resource allocation remains a challenge, due to the complexity of system structures and economic interdependencies among users and system operators' decision-making. 
The rise of distributed energy resources has introduced more diverse heterogeneous user groups, surpassing the capabilities of traditional efficiency-oriented allocation schemes.    
Without explicitly bringing fairness to user-system interaction, this disparity often leads to disproportionate payments for certain user groups due to their utility formats or group sizes. 

Our paper addresses this challenge by formalizing the problem of fair energy resource allocation and introducing the framework for aggregators. 
This framework enables optimal fairness-efficiency trade-offs by selecting appropriate objectives in a principled way. 
By jointly optimizing over the total resources to allocate and individual allocations, our approach reveals optimized allocation schemes that lie on the Pareto front, balancing fairness and efficiency in resource allocation strategies. 
\end{abstract}

\section{Introduction}\label{sec:introduction}
Distributed energy resources (DERs), such as small-scale solar and wind generation, electric vehicles, and batteries, are crucial components of the clean energy transition; they enable end-users to actively participate in the energy market by generating, storing, and potentially selling electricity back to the grid \cite{akorede2010ders}. However, individual users often cannot directly interact with the larger electricity market, facing barriers because of the complexity of energy markets, lack of economies of scale, and high transaction costs~\cite{burger2017review}. Therefore,  energy aggregators help overcome the barriers faced by individuals by negotiating for power on their behalf and distributing both the costs and benefits amongst the group.

The goal of an aggregator is typically to maximize the efficiency of the group of users. More specifically, it maximizes the total surplus (utility minus payment) of the users~\cite{sarker2016optimal,contreras2017participation,xie2022information,chen2023competitive}. A number of algorithms, both distributed and centralized~(see e.g.,~\cite{li2011optimal,carreiro2017energy,li2023socially} and the references within), have been proposed over the years.  However, focusing solely on efficiency may lead to large asymmetries in the allocation and surplus of the users. In short, the allocation can be unfair. For example, the results in ~\cite{yang2021optimal,fornier2024fairness}, as well as our own findings in this paper, demonstrate that maximizing efficiency may result in disproportionately more energy being allocated to households or businesses with a higher willingness and ability to pay, leaving fewer resources for those with lower incomes. 

The need for fairness consideration in the energy domain has been recognized and has gained importance in recent years. This highlights the need for a more comprehensive approach that balances efficiency and fairness in energy resource allocation. For example, \cite{EERE} highlights the importance of fairness by estimating the impact the Department of Energy's Office of Energy Efficiency and Renewable Energy's investments have on disadvantaged communities and minority-serving universities. In the energy justice literature, distributional justice examines the fair allocation of energy benefits and burdens~\cite{jenkins2016energy,ren2023literature}. While these studies have provided valuable insights into fairness and equity issues in energy systems, they have primarily focused on qualitative evaluations of the outcomes of particular allocation policies. There is a need for a quantitative framework that enables rigorous analysis and optimization of different allocation strategies.

This paper makes two main contributions towards this goal.  First, we formalize the problem of fair energy resource allocation, providing a framework for studying fairness in the context of energy systems. This framework allows aggregators to trace out a portion of the Pareto front and explore the optimal trade-offs between efficiency and fairness. Second, we generalize the resource allocation problem to involve jointly optimizing the total resources to allocate and the allocation to individual users. This generalization leads to new theoretical and computational challenges. In particular, the joint optimization problem is, in general, not jointly convex, which makes it difficult to solve directly using standard optimization techniques; however, we show that it can be solved effectively by searching over convex subproblems.


Our work is similar in spirit to \cite{moret2019energy}, which introduced the concept of energy collectives—a community-based market structure—that can be used to encourage fairness among market participants. However,~\cite{moret2019energy} did not explicitly model users' surplus and can lead to suboptimal tradeoffs between different fairness measures. In addition, the participants are restricted to quadratic utilities. Our approach in this paper takes a broader scope and addresses the challenges of finding optimal fairness and efficiency tradeoffs between the users.  


Fair resource allocation has been widely studied in various domains, including wireless communications~\cite{brehmer2009proportional, sinha2017incentive}, networks~\cite{chen2021bring}, and machine learning~\cite{li2021ditto}. However, fair resource allocation in the energy domain has received relatively less attention. The unique characteristics of energy systems, such as the price being used in actual payments (instead of shadow prices in communication networks), make the problem of fair energy resource allocation particularly interesting and challenging. Specifically, resource allocation problems have traditionally focused on fixed users' utilities; however, we focus on users' surpluses where the price for the resource is a function of the total resources allocated. Optimizing over surpluses allows for a more complete analysis that captures the relationship between resource allocation and pricing decisions, which is particularly relevant in the energy sector.


The remainder of the paper is structured as follows: Section~\ref{sec:problem-formulation} formalizes the fair energy resource allocation problem; Section~\ref{sec:fairness-measures} reviews the concept of $\alpha$-fairness and the price of fairness and price of efficiency metrics; Section~\ref{sec:theoretical-results} discusses our theoretical results; Section~\ref{sec:simulation-results} provides numerical results and analysis; and Section~\ref{sec:conclusion-future-work} concludes the paper and offers future research directions.

\section{Problem Formulation}\label{sec:problem-formulation}
We study the problem of allocating a group of users' surplus to reach a desirable level of fairness and efficiency. 
Consider a system involving $N$ users and a central decision maker: 
the central decision maker decides on the total resources to purchase and allocates the resources to users with the objective of maximizing the chosen system performance metric. This central decision maker is also called the aggregator and we use these two terms interchangeably in the paper. 

Denote the utility of user $i$ to be $U_i$ and the amount of energy allocated to user $i$ as $x_i$. The unit price of energy faced by the users is denoted as $p$. Given $p$, the user surplus of user $i$ is defined as 
\begin{equation}\label{eqn:surpus_def}
s_i = U_i(x_i) - p x_i.
\end{equation}
For $N$ users, we define the surplus profile $\s$ to be the vector of each user's surplus $s_i$, and the allocation profile $\x$ to be the vector of each user's allocated energy $x_i$.\footnote{Throughout the paper, we use bold to indicate vectors and matrices.}

Let $l = \sum_i x_i$ be the total energy purchased by the aggregator and let $C(l)$ be the cost of procuring this amount. As standard practice, we assume that $C$ is differentiable and set the price of energy to be to the marginal cost $p = C'(l)$~\cite{kirschen2004fundamental}. In this paper, we actually only make use of $C'(l)$. Therefore, our results apply to markets where only the price of energy is given. The total payment from the users to the aggregator, and from the aggregator to the market, is $l \cdot C'(l)$. Quadratic functions are commonly used to model costs, and these lead to prices that are affine in demand.

In this paper, we make the following assumption:
\begin{assumption} \label{assump:convexity}
Each utility function $U_i(x_i)$ is concave in $x_i$, and $U_i(0) = 0, \forall i \in \{1, \dots, N\}$. 
Furthermore, the function $C(l)$ is convex in $l$ and $C(0) = 0$.
\end{assumption} 
This assumption is very common in economic modeling for networked systems ~\cite{randall2013economic}.



To account for fairness among the users, we make use of a fairness measure, denoted as $\Phi:\mathbb{R}^N \rightarrow \mathbb{R}$. We will introduce the format of $\Phi$ in more detail in the next section. 
In the following, we first introduce the optimization problem of interest: 
\begin{equation}\label{eqn:surplus_opt}
\begin{aligned}
    \max_{\mathbf{s}} \quad & \Phi(\s)\\
    \textrm{s.t.} \quad & \s \geq 0 \\
\end{aligned}
\end{equation}
The optimization problem in~\eqref{eqn:surplus_opt} is maximizing the fairness measure over the feasible set of nonnegative surplus. We sometime use $\mathcal{S}=\{s_i \geq 0, \; \forall i\}$ to denote this feasible set. 


For the convenience of the follow-up analysis, it is easier to work with the allocations $\x$ and the load $l$, rather than directly with the surpluses. Define the feasible set of $(\x, l)$ be $\mathcal{X} = \{(\x, l) \mid \sum_{i} x_i = l, \quad \x \geq 0 , \quad \s(\x, l) \geq 0 \}$.

An interesting fact is that $\mathcal{S}$ is not necessarily convex, even for quadratic utility and quadratic cost functions (see Fig.~\ref{fig:pareto-front}). Therefore, it is not immediately clear that the problem \eqref{eqn:surplus_opt} can be solved, regardless of the fairness measure. These types of issues have been part of the reason that it is not trivial to apply results about fairness from other domains to energy.  It turns out we can solve it by directly working with allocations $\x$ and the load $l$. Define the feasible set of $(\x, l)$ be $\mathcal{X} = \{(\x, l) \mid \sum_{i} x_i = l, \quad \x \geq 0 , \quad \s(\x, l) \geq 0 \}$, and we study 
\begin{equation}\label{eqn:general_opt}
\begin{aligned}
    \max_{\x, l} \quad & \Phi(\s(\x, l))\\
    \textrm{s.t.} \quad & \sum_{i} x_i = l \\
    & \s(\x, l) \geq 0 \\
    & \x \geq 0. 
\end{aligned}
\end{equation}
We denote the optimal solution to the problem in \eqref{eqn:general_opt} as $(\mathbf{x}^{*}, l^{*})$, and use $\s^{*}$ to denote the optimal solution to \eqref{eqn:surplus_opt}. We show that \eqref{eqn:general_opt} can be efficiently solved in the next section. 

The format of $\Phi$ trades off fairness and efficiency and the central decision maker chooses $\Phi$ depending on the performance requirements of the system. In the next section, we specify $\Phi$ using the notion of $\alpha$-fairness.


\begin{figure}[ht]   
    \centering\includegraphics[width=\columnwidth]{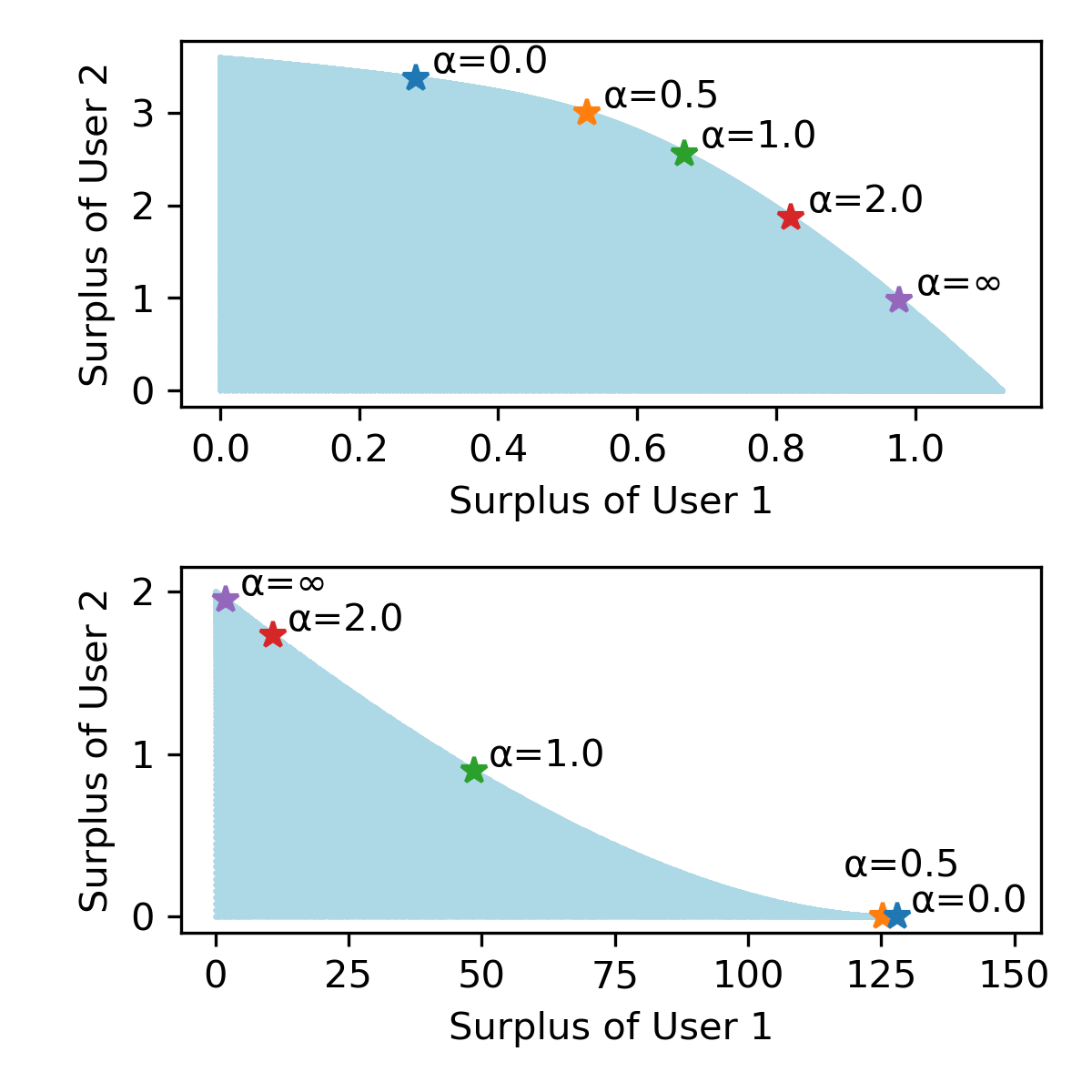}
    \caption{The figure illustrates the feasible regions and Pareto fronts for two-user systems with quadratic utilities. The top panel ($U_1(x_1) = -x_1^2 + 3x_1$ and $U_2(x_2) = -x_2^2 + 6x_2$) shows a convex feasible region, while the bottom panel ($U_1(x_1) = -x_1^2 + 40x_1$ and $U_2(x_2) = -x_2^2 + 4x_2$) shows a non-convex feasible region. Optimal $\alpha$-fairness solutions lie on the Pareto front (the upper right boundary of the feasible region) and increasing $\alpha$ traces out a portion of the Pareto front starting with the least fair social welfare solution ($\alpha=0$) to the most fair max-min solution ($\alpha=\infty$).}
    \label{fig:pareto-front}
\end{figure}



\section{Fairness Measures} \label{sec:fairness-measures}
\subsection{$\alpha$-fairness}
In this section, we detail the form of $\Phi$ that we use in this paper. 
We use the notion of $\alpha$-fairness~\cite{mo2000fair}, which provides a parametric family of functions that includes three widely used fairness measures: social welfare, proportional fairness, and max-min fairness. The idea of $\alpha$-fairness is to provide a unified framework in which the aggregator can tune the level of fairness by adjusting the $\alpha$ parameter, with higher $\alpha$ values producing more ``fair'' allocations. The $\alpha$-fairness is defined as 
\begin{equation}\label{eqn:alpha_fairness}
    \Phi(\s) = 
    \begin{cases}
        \sum_{i}\frac{s_i^{1-\alpha}}{1-\alpha} &  \text{for $\alpha \geq 0,\, \alpha \neq 1$}, \\
        \sum_{i}{\log{(s_i)}} &\text{for $\alpha=1$}.
    \end{cases}  
\end{equation}
In the following subsections, we describe the three common special cases of $\alpha$-fairness. 

\subsubsection{Social Welfare}
When $\alpha=0$, $\Phi(\s)=\sum_i s_i$, which is called the utilitarian objective, and the corresponding solutions are called the social welfare solution, denoted as $\x^{\mathcal{SW}}$. 
This solution is when the central decision-maker maximizes the sum of the surplus. This is considered to be the most ``efficient solution''~\cite{mo2000fair,kirschen2004fundamental}. 


\subsubsection{Proportional Fairness}
When $\alpha=1$, the resulting optimization problem is said to be maximizing the proportional fairness of the surpluses. It is also the generalized Nash bargaining solution for multiple users~\cite{boche2011generalization}. Proportional fairness can be intuitively understood as giving each user a proportional share of the resources based on their surpluses. It provides a compromise between efficiency and fairness, as it balances the total surplus with the individual surpluses of the users.  
Proportional fair allocation of user surplus, denoted as $\s^{\mathcal{PF}}$, should satisfy: compared to any other feasible allocation of user surplus, the aggregated proportional change is less than or equal to 0. In mathematical terms, 
\begin{equation} \label{eqn:prop_def}
    \sum_{i} \frac{s_i - s_i^{\mathcal{PF}}}{s_i^{\mathcal{PF}}} \leq 0, \forall \s \in \mathcal{S}.
\end{equation}
We state a simple lemma that shows that setting $\alpha=1$ in~\eqref{eqn:alpha_fairness} does give solutions that satisfy~\eqref{eqn:prop_def}, and the corresponding energy allocation $\x$ such that $\s(\x, \sum_i x_i)=\s^{\mathcal{PF}}$ is denoted as $\x^{\mathcal{PF}}$.
\begin{lemma}\label{proportional_fair_opt}
The proportional fair surplus profile denoted as $\s^{\mathcal{PF}}$, can be obtained as the optimal solution to the following optimization \cite{ boche2011generalization, kelly1998rate}: 

\begin{equation}\label{eqn:proportional_fair_opt}
\begin{aligned}
    \max_{\mathbf{s}} \quad & \sum_i \log(s_i)\\
    \textrm{s.t.} \quad & \s \in \mathcal{S}
\end{aligned}
\end{equation}
\end{lemma}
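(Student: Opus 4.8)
The plan is to show that the maximizer of the strictly concave program~\eqref{eqn:proportional_fair_opt} is characterized exactly by the variational inequality~\eqref{eqn:prop_def}, so that the two notions of proportional fairness coincide. The objective $\Phi(\s)=\sum_i \log(s_i)$ is differentiable on the positive orthant with gradient $\nabla\Phi(\s)=(1/s_1,\dots,1/s_N)^{\T}$, and the crux of the argument is the observation that $\nabla\Phi(\s^{\mathcal{PF}})^{\T}(\s-\s^{\mathcal{PF}})=\sum_i (s_i-s_i^{\mathcal{PF}})/s_i^{\mathcal{PF}}$ is precisely the left-hand side of~\eqref{eqn:prop_def}. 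Thus the defining inequality of proportional fairness is nothing but the first-order optimality condition of the logarithmic maximization, and the proof reduces to deriving that condition.

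First I would let $\s^{\mathcal{PF}}$ be an optimal solution of~\eqref{eqn:proportional_fair_opt} and fix an arbitrary competitor $\s\in\mathcal{S}$. I would then study the scalar function $g(t)=\Phi\!\left(\s^{\mathcal{PF}}+t(\s-\s^{\mathcal{PF}})\right)$ for $t\in[0,1]$, which is concave in $t$ because $\Phi$ is concave and its argument is affine in $t$. Since $\s^{\mathcal{PF}}$ maximizes $\Phi$ over $\mathcal{S}$, we have $g(t)\le g(0)$, so $g$ attains its maximum at the left endpoint $t=0$; concavity then forces the one-sided derivative to satisfy $g'(0^{+})=\nabla\Phi(\s^{\mathcal{PF}})^{\T}(\s-\s^{\mathcal{PF}})=\sum_i (s_i-s_i^{\mathcal{PF}})/s_i^{\mathcal{PF}}\le 0$, which is exactly~\eqref{eqn:prop_def}. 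As $\s$ was arbitrary, $\s^{\mathcal{PF}}$ satisfies the proportional fairness condition, establishing the claim.

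The step requiring care is the feasibility of the chord $\s^{\mathcal{PF}}+t(\s-\s^{\mathcal{PF}})$: evaluating $g'(0^{+})$ presumes these intermediate points lie in $\mathcal{S}$, which is automatic when $\mathcal{S}$ is convex but is precisely the property the paper flags as possibly failing for the achievable surplus set. I therefore expect this to be the main obstacle. I would resolve it by invoking convexity of the surplus region for this particular lemma, which is the setting under which the textbook equivalence of~\cite{kelly1998rate, boche2011generalization} applies directly, and by noting that the genuine non-convexity of $\mathcal{S}$ is handled separately through the reformulation in the $(\x,l)$ space introduced in Section~\ref{sec:problem-formulation}. Finally, since $\log$ is strictly concave, $\Phi$ is strictly concave, so its maximizer over $\mathcal{S}$ is unique whenever it exists; this guarantees that the proportional fair surplus profile is well defined and uniquely pinned down by~\eqref{eqn:proportional_fair_opt}.
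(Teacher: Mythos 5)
Your proof is correct and follows essentially the same route as the paper's: both reduce the lemma to the first-order optimality condition $\bigl\langle \nabla \Phi(\mathbf{s}^{\mathcal{PF}}),\, \mathbf{s}-\mathbf{s}^{\mathcal{PF}}\bigr\rangle = \sum_i (s_i - s_i^{\mathcal{PF}})/s_i^{\mathcal{PF}} \le 0$ for all $\mathbf{s}\in\mathcal{S}$, which is exactly~\eqref{eqn:prop_def}, and your chord argument $g(t)=\Phi\bigl(\mathbf{s}^{\mathcal{PF}}+t(\mathbf{s}-\mathbf{s}^{\mathcal{PF}})\bigr)$ merely spells out the one-line derivation the paper invokes. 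Your added caveat---that evaluating $g'(0^{+})$ requires the segment to remain in $\mathcal{S}$, i.e.\ convexity of the surplus region---is a hypothesis the paper's proof uses tacitly even though it elsewhere stresses that $\mathcal{S}$ can be non-convex, so your version is, if anything, the more careful one.
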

\begin{proof}
By first-order optimality condition, the objective of optimization problem \eqref{eqn:proportional_fair_opt} can be written as \[\Big\langle\ \mathbf{s} - \mathbf{s}^{\mathcal{PF}}, \nabla \sum_{i} \log \s_i^{\mathcal{PF}}\Big\rangle 
 = \sum_{i} \frac{s_i - s_i^{\mathcal{PF}}}{s_i^{\mathcal{PF}}}\leq 0, \forall \mathbf{s} \in \mathcal{S}\]
This inequality matches \eqref{eqn:proportional_fair_opt}. 
\end{proof}

\subsubsection{Max-min Fairness}
When $\alpha=\infty$, we obtain the max-min allocation is denoted as $\x^{\textit{MaxMin}}$. This solution maximizes the worst-case surplus for each user in the system, and is sometimes called the egalitarian solution since it is considered to be the most ``fair''.  

\subsection{Pareto Efficiency}
To signify how each different $\alpha$ leads to different fair surplus allocations, we first present the definition of Pareto optimality and Pareto front~\cite{lan2010axiomatic, joe2013multiresource}. 

\begin{definition}[Pareto Optimality]
\label{def:pareto_optimal}
A feasible surplus profile  $\mathbf{s} \in \mathcal{S}$ is Pareto optimal if there doesn't exist another set of feasible $\mathbf{\bar{s}}$ such that 
\[
s_i \leq \bar{s}_i, \forall i
\]
with at least one inequality strict.
\end{definition}
 
\begin{definition}[Pareto Front]
\label{def:pareto_front}
The Pareto front is the set of all Pareto optimal surpluses. 
\end{definition}

Pareto optimality captures the notion of optimal tradeoffs: no user can improve its surplus without decreasing other users' surpluses. Because $\Phi$ is an increasing function for all $\alpha  \geq 0$, if a surplus profile is not on the Pareto front, it is neither efficient nor fair, since there are strictly better solutions for all values of $\alpha$. 


To illustrate how different fairness notions lead to different surplus allocations within $\mathcal{S}$, as shown in Fig.~\ref{fig:pareto-front} we visualize the feasible set of surplus for an example 2-user systems with quadratic utility functions and plot the optimal surplus profile under a selected set of fair objective. 
As shown in the plots, the feasible surplus region is not always convex. Depending on the formats of users' utilities, the shape of the Pareto front, as well as the distribution of optimal $\alpha$-fairness solutions along the Pareto front, differ from one another.


Such Pareto front~\cite{jia2016dynamic} trade-off curves allow the aggregator to make informed decisions on how to balance fairness and efficiency among the users in designing an appropriate objective. It's important to note that we only explore certain regions of the Pareto front. Some points on the Pareto front are neither efficient nor fair under our chosen metric. In particular, the aggregator should only explore between utilitarian and max-min points by choosing different $\alpha$.


\subsection{Efficiency Measures}
We leverage the price of fairness (PoF) and the price of efficiency (PoE)~\cite{bertsimas2011price, bertsimas2012efficiency} to quantitatively measure the efficiency-fairness trade-offs in our systems. 

Given the feasible surplus set $\mathcal{S}$, denote the optimal utilitarian objective value as $\textit{SYSTEM}(\mathcal{S})$, the fair objective under $\Phi$ as $\textit{FAIR}(\mathcal{S}, \alpha)$; that is, $\textit{SYSTEM}(\mathcal{S})=\textit{FAIR}(\mathcal{S},0)$.
By definition, 
\[ \textit{PoF}(\mathcal{S}, \alpha) = \frac{\textit{SYSTEM}(\mathcal{S}) - \textit{FAIR}(\mathcal{S}, \alpha)}{\textit{SYSTEM}(\mathcal{S})}
\]
The price of fairness quantifies the relative decrease in total user surplus when using a fair allocation compared to the utilitarian solution. In other words, it measures the relative reduction in overall efficiency. 
Denote the Max-Min allocation as $\max_{\s \in \mathcal{S}} \min_{i} s_i$ and for each $\alpha$-fair surplus allocation, denoted as $\mathbf{z}(\alpha)$
\[ \textit{PoE}(\mathcal{S}, \alpha) = \frac{\max_{\s \in \mathcal{S}} \min_{i} s_i - \min_{i} z_i(\alpha)}{\max_{\s \in \mathcal{S}} \min_{i} s_i}
\]
The price of efficiency is the relative decrease in the minimum surplus of the users under a given allocation compared to the max-min fair allocation (the most "fair" allocation~\cite{bertsimas2012efficiency, lan2010axiomatic}). The price of efficiency measures the relative reduction in the surplus of the worst-off user.

\section{Optimization and Exploring the Pareto Front}\label{sec:theoretical-results}
\subsection{Fairness Metric and Feasible Surplus Region}


Note that for any value of $\alpha \geq 0$, the $\Phi$ function is concave and monotonically increasing~\cite{altman2008generalized}. The feasible surplus region $\mathcal{S}$ is typically assumed to be convex, thus making optimizing $\Phi$ over $\mathcal{S}$ a convex optimization problem. However, as shown in Fig.~\ref{fig:pareto-front}, $\mathcal{S}$ is not convex even for simple utility and cost functions. In the following, we work with~\eqref{eqn:general_opt} and optimize directly over $\x$ and $l$, which leads to more tractable problems. 


\subsection{Optimization Characterization}




Optimizing over $\x, l$ jointly is not a convex optimization because of the product between $C'(l)$ and $x_i$'s.  
We propose to optimize over $\x$ and $l$ separately in an iterative fashion. Given $l$, the optimization problem in \eqref{eqn:optimal_x_given_L} optimizes over $\x$:
\begin{equation}\label{eqn:optimal_x_given_L}
\begin{aligned}
    J(l)=\max_{\x} \quad & \Phi (\s(\x, l)) \\
     \textrm{s.t.} \quad & \sum_{i} x_i = l \\
     & \s(\mathbf{x}, l) \geq 0\\
     & \mathbf{x} \geq 0\\
\end{aligned}
\end{equation}
The above optimization problem is clearly convex. Because $l$ is a scalar, a grid search would find the optimal $l$ without much difficulty, as shown in Algorithm~\ref{alg:grid-search}.

\begin{algorithm}
    \caption{Grid Search: Searches through a discretized set of values for $l$ to approximate the solution for the global optimization problem \eqref{eqn:general_opt}.}
    \label{alg:grid-search}
    \begin{algorithmic}[1]
        \Require{step size $\Delta l$ and maximum value $l_\text{max}$}    
        \State {\textbf{initialize} $\phi^{*} \gets -\infty$ and $l^{*} \gets 0$}
        \For{$l$ in $[\Delta l,\, 2\Delta l,\, \dots,\, l_\text{max}]$}
            \State {$\phi \gets J(l)$}
            \If{$\phi > \phi^{*}$}
                \State {$\phi^{*} \gets s$ and $l^{*} \gets l$} 
            \EndIf
        \EndFor\\
        \Return{$l^{*}$}
    \end{algorithmic}
\end{algorithm}



        
    


\newtheorem{conjecture}[theorem]{Conjecture}

In a networked setting where $l$ is a vector, grid search could become computationally expensive. However, we make the following conjecture:
\begin{conjecture}
\label{lem:opt_on_L}
The function $J(l)$, as defined in \eqref{eqn:optimal_x_given_L}, is quasiconvex. In particular, it remains quasiconvex when $l$ is a vector, as long as the relationship between $x_i$ and $l$ is affine. 
\end{conjecture}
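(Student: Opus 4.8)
The plan is to establish quasiconvexity through its defining property: for every level $\gamma \in \mathbb{R}$ the sublevel set $\mathcal{L}_\gamma = \{\,\mathbf{l} : J(\mathbf{l}) \le \gamma\,\}$ is convex. The first step is to remove the cost nonconvexity by reparametrizing: since $C$ is convex, the price $\p = C'(\mathbf{l})$ is a monotone (in the scalar case, nondecreasing) function of $\mathbf{l}$, and because preimages of intervals under monotone maps are intervals, it suffices to prove quasiconvexity of the value function expressed in the price variable and then compose back through $C'$. The second step is to exploit the affine relationship between the allocation and the load that the conjecture singles out: writing the feasible allocations at load $\mathbf{l}$ as $\x = \x_0(\mathbf{l}) + B\mathbf{z}$, where $\x_0(\cdot)$ is a particular affine solution and the columns of $B$ span the null space of the load map, decouples the ``shape'' variable $\mathbf{z}$ from $\mathbf{l}$ and lets me write $J$ as a genuine pointwise supremum $J(\mathbf{l}) = \sup_{\mathbf{z}\in \mathcal{Z}(\mathbf{l})} \Phi(\s(\x_0(\mathbf{l}) + B\mathbf{z},\, \mathbf{l}))$.

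The reason to aim for a supremum representation is the closure property I would lean on: a pointwise supremum of quasiconvex functions is quasiconvex, because $\{\,\mathbf{l} : \sup_{\mathbf{z}} g_{\mathbf{z}}(\mathbf{l}) \le \gamma\,\} = \bigcap_{\mathbf{z}} \{\,\mathbf{l} : g_{\mathbf{z}}(\mathbf{l}) \le \gamma\,\}$ is an intersection of convex sets. Concretely, I would fix a level $\gamma$, take two loads $\mathbf{l}_1, \mathbf{l}_2 \in \mathcal{L}_\gamma$ and $\lambda \in [0,1]$, and try to show that every allocation feasible at $\mathbf{l}_\lambda = \lambda \mathbf{l}_1 + (1-\lambda)\mathbf{l}_2$ achieves $\Phi \le \gamma$. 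The affinity lets me split an allocation at $\mathbf{l}_\lambda$ into a convex combination of allocations at $\mathbf{l}_1$ and $\mathbf{l}_2$, and the plan is to combine this with the monotonicity of the price and the concavity and monotonicity of $\Phi$ (Assumption~\ref{assump:convexity}) to dominate $\Phi(\s(\x,\mathbf{l}_\lambda))$ by the endpoint values, certifying membership in $\mathcal{L}_\gamma$.

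The main obstacle, which I expect to be where the argument is delicate and why the statement is only conjectured, is controlling how the feasible surplus region and the price deform jointly as $\mathbf{l}$ moves. The difficulty is concentrated in the payment term $C'(\mathbf{l})\,x_i$: convexity of $C$ only gives $C'(\mathbf{l}_\lambda) \preceq \lambda C'(\mathbf{l}_1) + (1-\lambda) C'(\mathbf{l}_2)$, and whether this inequality helps or hurts the bound on $\Phi$ depends on its interaction with the splitting of $\x$, so the naive decoupling can fail to make each slice $\mathbf{l}\mapsto \Phi(\s(\x_0(\mathbf{l})+B\mathbf{z},\mathbf{l}))$ quasiconvex. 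In particular, one must rule out the competing possibility that $J$ is single-peaked (quasiconcave) with an interior maximizer; excluding this is the crux, and I anticipate it will require the full strength of the affine load--allocation relationship together with the monotonicity of $C'$, and possibly additional structural assumptions on $C$ or on the utilities beyond Assumption~\ref{assump:convexity}.
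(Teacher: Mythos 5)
You should first know that the paper contains no proof of this statement: it is explicitly left as Conjecture~\ref{lem:opt_on_L}, supported only by empirical validation, with a rigorous proof named as future work. So there is no reference argument to compare against, and any correct proof here would be new content. That said, your plan as written has a concrete failure point. The closure property you lean on --- a pointwise supremum of quasiconvex functions is quasiconvex --- requires the index set of the supremum to be independent of the parameter. After your reparametrization $\x = \x_0(l) + B\mathbf{z}$, the inner constraint set $\mathcal{Z}(l)$ (inherited from $\x \geq 0$ and $\s(\x, l) \geq 0$, both of which depend on $l$ through $C'(l)$) still varies with $l$, so $J(l) = \sup_{\mathbf{z} \in \mathcal{Z}(l)}(\cdot)$ is not a pointwise supremum over a fixed family, and extending each slice by $-\infty$ off $\{l : \mathbf{z} \in \mathcal{Z}(l)\}$ does not repair this, since sublevel sets of such extensions absorb the complement of the domain and need not be convex. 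Worse, the individual slices are typically the wrong shape: with $x_i$ affine in $l$ and $C'$ convex increasing, the payment term contributes $-C'(l)\,x_i(l)$, the negative of a convex function, so each slice tends to be concave (unimodal) in $l$ rather than quasiconvex.

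The deeper problem is directional, and your last paragraph aims at exactly the wrong target when it treats ``ruling out that $J$ is single-peaked'' as the crux. Taken literally, the conjectured quasiconvexity is contradicted by the paper's own Theorem~\ref{lem:quadratic_case}: for concave quadratic utilities and $\alpha \in \{0, 1, \infty\}$, problem \eqref{eqn:general_opt} is jointly concave in $(\x, l)$ over a convex feasible set, so $J(l)$ in \eqref{eqn:optimal_x_given_L}, being a partial maximization of a jointly concave function, is concave in $l$; since $J$ generically has an interior maximizer $l^{*} > 0$ (as in the paper's two-user example), it violates $J(\lambda l_1 + (1-\lambda) l_2) \leq \max\{J(l_1), J(l_2)\}$ and is therefore \emph{not} quasiconvex. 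The statement is almost certainly a misprint for quasiconcave, which is also the property that makes the search in Algorithm~\ref{alg:grid-search} meaningful and that would generalize usefully to vector $l$ --- maximizing a quasiconcave function over a convex set is tractable, whereas maximizing a quasiconvex one is not. Note also that your one inequality from convexity, $C'(\mathbf{l}_\lambda) \preceq \lambda C'(\mathbf{l}_1) + (1-\lambda) C'(\mathbf{l}_2)$, pushes prices \emph{down} at the convex combination and hence surpluses \emph{up}, which is evidence for convex superlevel sets, not sublevel sets. The productive open problem is thus the mirror image of your plan: show that the superlevel set $\{(\x, l) : \Phi(\s(\x, l)) \geq \gamma\}$ projects onto a convex set in $l$ for general concave utilities and convex $C$, extending the joint-concavity argument of Theorem~\ref{lem:quadratic_case} beyond the quadratic case.
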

We empirically validated this conjecture for a large number of settings. Providing a rigorous proof is an important future direction for us. 

Last, since quadratic utility functions are commonly adopted in practice and in the literature~\cite{kirschen2004fundamental,moret2019energy}, we provide a result on when the optimization problem is jointly convex in $\x$ and $l$ under this setting.
\begin{theorem}\label{lem:quadratic_case}
If the utility functions of each user are concave and quadratic, $C$ and $C'$ are convex, and $C'$ is twice differentiable. Then the optimization problem in~\eqref{eqn:general_opt} jointly concave in $\mathbf{x}, l$ for  $\alpha=0,1,\infty$ (that is, the social welfare, proportional fair, and the max-min fairness cases).
\end{theorem}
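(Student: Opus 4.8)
The plan is to reduce each of the three objectives to a manifestly concave function of $(\x,l)$, the unifying tool being the \emph{log-concavity} of each individual surplus. Writing every quadratic utility in the form $U_i(x_i)=b_ix_i-a_ix_i^2$ with $a_i\ge 0$ (the shape forced by concavity together with $U_i(0)=0$), the surplus is $s_i(x_i,l)=U_i(x_i)-C'(l)x_i$, a function of $(x_i,l)$ alone. The point I would stress first is the obstacle: because of the bilinear term $C'(l)x_i$, the map $s_i$ is \emph{not} jointly concave in $(x_i,l)$ (for affine price its Hessian $\big(\begin{smallmatrix}-2a_i&-c\\-c&0\end{smallmatrix}\big)$ is indefinite), so one cannot invoke the ``concave-increasing $\Phi$ composed with concave $s_i$'' rule. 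I would therefore not try to prove $s_i$ concave; instead I would prove $s_i$ is log-concave.

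The key lemma is then the following factorization: $s_i(x_i,l)=x_i\big(b_i-a_ix_i-C'(l)\big)$. On the region $x_i>0,\ s_i>0$ both factors are strictly positive, so $\log s_i=\log x_i+\log\big(b_i-a_ix_i-C'(l)\big)$. The first summand is concave; in the second, the argument $b_i-a_ix_i-C'(l)$ is concave in $(x_i,l)$ (affine in $x_i$ minus the convex $C'(l)$), and $\log$ of a positive concave function is concave. Hence $\log s_i$ is jointly concave in $(x_i,l)$, and a fortiori in $(\x,l)$ since it ignores the remaining coordinates. This is exactly where the quadratic assumption is consumed: it lets one pull the factor $x_i$ out of $U_i$ and leave a concave residual. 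As a byproduct, log-concavity makes $\{s_i>0\}$ convex (a superlevel set of a quasiconcave function), which is the relevant feasible region for the two ``fair'' cases.

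With the lemma in hand the three cases follow. For $\alpha=1$ the objective is $\sum_i\log s_i$, a sum of concave functions, hence jointly concave. For $\alpha=\infty$ the objective $\min_i s_i$ is itself generally not concave, but since $\log$ is increasing, maximizing $\min_i s_i$ is equivalent to maximizing $\min_i\log s_i$, and the latter is a minimum of concave functions, hence concave; I would phrase the conclusion as ``the max-min problem is equivalent to a jointly concave program.'' For $\alpha=0$ the objective is $\sum_i s_i=\sum_iU_i(x_i)-C'(l)\sum_ix_i$; here I use the equality constraint $\sum_ix_i=l$ to rewrite it as $\sum_iU_i(x_i)-lC'(l)$, a sum of a concave function of $\x$ and a concave function of $l$, because $\tfrac{d^2}{dl^2}\big(lC'(l)\big)=2C''(l)+lC'''(l)\ge 0$ (using $C,C'$ convex, $C'$ twice differentiable, and $l\ge 0$). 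This is where twice-differentiability of $C'$ is used.

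The main obstacle is entirely the key lemma; once the factorization $s_i=x_i\big(b_i-a_ix_i-C'(l)\big)$ is spotted the rest is bookkeeping. Two points warrant care in the write-up: first, for $\alpha=\infty$ the statement must be read up to the monotone reparametrization $s_i\mapsto\log s_i$, since $\min_i s_i$ is not concave in the raw variables; and second, the boundary $x_i=0$ (where $s_i=0$ and $\log s_i=-\infty$), which I would exclude by noting that the two fair objectives push the optimizer into the open region $\{s_i>0\ \forall i\}$, while the social-welfare objective is handled directly through its concave reformulation.
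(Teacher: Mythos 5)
Your proposal is correct and follows essentially the same route as the paper's own proof: the same factorization $s_i = x_i\bigl(b_i - a_i x_i - C'(l)\bigr)$ yielding a convex feasible set, the same reduction of the $\alpha=0$ objective to $\sum_i U_i(x_i) - l\,C'(l)$ with $\frac{d^2}{dl^2}\bigl(l\,C'(l)\bigr) = 2C''(l) + l\,C'''(l) \ge 0$, the same log-decomposition for $\alpha=1$, and the same monotone $\log$ reparametrization making the max-min case a minimum of concave functions. Your added remarks (the indefinite Hessian showing why naive composition fails, and the care at the boundary $x_i=0$) are sound refinements but do not change the argument.
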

\begin{proof}
For quadratic utility functions we have $U_i(x_i) = -a_i x_i^{2} + b_i x_i$, with $a_i > 0$. We can factor out $x_i$ from the surplus to obtain
\begin{equation*}
\begin{split}
    s_i(x_i, l) 
        &= -a_i x_i^2 + b_i x_i - C'(l) x_i \\
        &= x_i(-a_i x_i + b_i - C'(l)).
\end{split}
\end{equation*}
Thus, each non-negativity surplus constraint can be decoupled into two separate constraints $x_i \geq 0$ and  $-a_i x_i + b_i - C'(l) \geq 0$. Both of these constraints are convex, hence the feasible set is convex. Next, we look at the objective function for different values of $\alpha$. 
\begin{enumerate}
    \item When $\alpha = 0$, the objective in~\eqref{eqn:general_opt} can be written as
    \begin{equation*}
    \begin{split}
        \Phi(\s(\x, l)) 
            &= \sum_i x_i(-a_i x_i + b_i - C'(l)) \\
            &= \sum_i x_i (-a_i x_i + b_i ) - l\cdot C'(l).
    \end{split}
    \end{equation*}
    In the second line, we used $\sum_i x_i = l$. Since $\sum_i x_i (-a_i x_i + b_i)$ is concave in $\x$ by assumption, we focus on showing $l\cdot C'(l)$ is convex in $l$. As both $C$ and $C'$ are convex, we have $C''(l) \geq 0$ and $C'''(l) \geq 0$. Since we only consider $l \ge 0$, we have $ \diff[2]{(l\cdot C'(l))}{l} = 2 C''(l) + l \cdot C'''(l) \geq 0$. 


    \item When $\alpha=1$, the objective function is
    \begin{equation*}
    \begin{split}
        \Phi(\s(\x, l)) 
            &= \sum_i \log(x_i(-a_i x_i + b_i - C'(l))) \\
            &= \sum_i \log(x_i) + \log(-a_i x_i + b_i - C'(l)).
    \end{split}
    \end{equation*}
    The composition of a concave function ($\log$) and a concave function ($-a_i x_i + b_i - C'(l)$) is also concave. 
    \item When $\alpha=\infty$, we recognize that
    $$
        \max_{(\x, l) \in \mathcal{X}} \min_{i} \-a_i x_i^{2} + b_i x_i - C'(l) x_i
    $$
    is equivalent to
    $$
        \max_{(\x, l) \in \mathcal{X}} \min_{i} \log(-a_i x_i^{2} + b_i x_i - C'(l) x_i).
    $$
    As proved in previous case, $\log(-a_i x_i^{2} + b_i x_i - C'(l) x_i)$ is concave on $x_i$ and $l$. So the minimum of concave functions is a concave function on $\x, l$. 
\end{enumerate}
\end{proof}


\subsection{Pareto Efficiency}
Here we state a lemma about the Pareto optimality, which allows us to restrict our attention to the Pareto front. 



\begin{lemma} \label{def:pareto_front_def}
The optimal solution $\mathbf{s}^{*}$ to \eqref{eqn:surplus_opt} (or it's the corresponding $\mathbf{x}^{*}, l^{*}$ to \eqref{eqn:general_opt}), lies at the Pareto front of $\mathcal{S}$, which is the upper right side boundary of the set of feasible surplus. 
\end{lemma}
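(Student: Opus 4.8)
The plan is to prove the claim by contradiction, leveraging the fact already noted after Definition~\ref{def:pareto_front}: that $\Phi$ is strictly increasing in each coordinate for every finite $\alpha \ge 0$. First I would reduce the statement to the surplus space. Every feasible $(\x, l) \in \mathcal{X}$ induces a surplus profile $\s(\x, l) \in \mathcal{S}$, and by construction the optimal value of~\eqref{eqn:general_opt} equals that of~\eqref{eqn:surplus_opt} with $\s^{*} = \s(\x^{*}, l^{*})$. Hence it suffices to show that the maximizer $\s^{*}$ of~\eqref{eqn:surplus_opt} is Pareto optimal in $\mathcal{S}$; the corresponding $(\x^{*}, l^{*})$ then inherits this property through the surplus map.

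Next, suppose toward a contradiction that $\s^{*}$ is not on the Pareto front. By Definition~\ref{def:pareto_optimal}, there exists a feasible $\bar{\s} \in \mathcal{S}$ with $s_i^{*} \le \bar{s}_i$ for all $i$ and $s_j^{*} < \bar{s}_j$ for at least one index $j$. I would then confirm strict monotonicity directly from the partial derivatives of $\Phi$: for $\alpha \neq 1$ one has $\partial \Phi / \partial s_i = s_i^{-\alpha} > 0$, and for $\alpha = 1$ one has $\partial \Phi / \partial s_i = 1/s_i > 0$, on the positive orthant where the surpluses live. Because $\Phi$ is strictly increasing in each argument, a coordinatewise domination of $\s^{*}$ with at least one strict component yields $\Phi(\bar{\s}) > \Phi(\s^{*})$, contradicting the optimality of $\s^{*}$. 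Therefore $\s^{*}$ must lie on the Pareto front, which geometrically is exactly the upper-right boundary of $\mathcal{S}$ along the direction of simultaneous coordinate increase.

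The main obstacle is the max-min case $\alpha = \infty$, where $\Phi(\s) = \min_i s_i$ is only weakly, not strictly, increasing in each coordinate, so the contradiction above does not immediately go through: a plain maximizer of $\min_i s_i$ may permit raising some users' surpluses without changing the minimum, and such a point need not be Pareto optimal. To handle this I would either (i) adopt the standard lexicographic max-min convention, in which ties at the minimum are broken by successively maximizing the next-smallest surplus, so that the resulting profile is Pareto optimal by construction; or (ii) obtain $\s^{*}$ as the limit of the finite-$\alpha$ solutions $\s(\alpha)$ as $\alpha \to \infty$. Each $\s(\alpha)$ is Pareto optimal by the argument above, and since the Pareto front of the compact feasible set is closed, the limit remains Pareto optimal. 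Establishing that one of these refinements is the intended interpretation of $\x^{\textit{MaxMin}}$ is the only delicate point; the other cases follow immediately from strict monotonicity.
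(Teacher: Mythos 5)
Your proof takes essentially the same route as the paper's: assume $\s^{*}$ is not Pareto optimal, produce a coordinatewise-dominating feasible point, and contradict optimality via the component-wise strict monotonicity of $\Phi$. Your separate treatment of the $\alpha=\infty$ case is in fact more careful than the paper, which simply asserts strict monotonicity for all $\alpha \geq 0$ by citation (glossing over the fact that $\min_i s_i$ is only weakly increasing), but this refinement does not change the substance of the argument.
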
 
\begin{proof}
A function $f:\mathbb{R}^N \rightarrow \mathbb{R}$ is component-wise strictly increasing if for $\bar{\y} \geq \y$, where the inequality is strict in at least one component,  we have $f(\bar{\y})>f(\y)$. 
The function $\Phi$ is component-wise strictly increasing for all $\alpha \geq 0$~\cite{mo2000fair}. 

Now suppose the solution to optimization~\eqref{eqn:surplus_opt}, $\s^{*}$, doesn't lie at the upper right boundary of $\mathcal{S}$. Then there exist another feasible point $\s^{*} + \bm{\epsilon}$ where $\epsilon_i >0$ for at least one $i$. Since $\Phi$ is component-wise strictly increasing, we have $ \Phi(\s^{*} + \bm{\epsilon}) > \Phi(\s^{*})$, which contradicts our assumption that $\s^{*}$ is an optimal solution to optimization problem \eqref{eqn:surplus_opt}. 
\end{proof}

\section{Simulation Results}\label{sec:simulation-results}
In this section, we demonstrate through simulations how our modeling framework could enable the aggregator to make fair allocations. We first present a simple two-user example that shows how different fairness criteria lead to different allocations and surpluses. Next, we examine how the price of fairness and price of efficiency scale with the number of users. Finally, we provide a two-class example that demonstrates how fair allocation mechanisms (specifically proportional fairness) can help reduce disparities amongst different user groups.\footnote{Our code for
the numerical simulations can be found at \url{https://github.com/lijiayi9712/fair\_resource\_allocation}.}


\subsection{Two-user example}
We now demonstrate in a simple two-user example how the social welfare solution can produce an unfair allocation while the max-min solution results in a more even allocation at the expense of efficiency and the proportionally fair solution provides a compromise between the two. In this example, the users have quadratic utilities $U_1(x_1) = -x_1^2 + 3x_1$ and $U_2(x_2) = -x_2^2 + 6x_2$. As shown in Fig.~\ref{fig:pareto-front} and Table~\ref{table:simple-example}, we see that under the social welfare solution, user 1 receives most of the allocation while user 2 receives almost nothing. On the other hand, optimizing max-min fairness results in a relatively even allocation, but the efficiency (total surplus) of the system is greatly reduced. Optimizing proportional fairness gives an allocation that is more even than the social welfare solution and has a higher efficiency than that of the max-min fairness solution.

\begin{table}[h]
    \centering
    \begin{tabular}{|l|c|c|c|c|c|}
    \hline
    & \multicolumn{2}{c|}{\textbf{Allocation}}
    & \multicolumn{3}{c|}{\textbf{Surplus}}
    \\ \cline{2-6}
    \textbf{Criterion}   & \textbf{User 1} & \textbf{User 2}  & \textbf{User 1} & \textbf{User 2}      & \textbf{Total} \\ \hline
    $\alpha=0.0$ (SW)    &  0.187  &  1.125  &  0.281  &  3.375  &  3.656 \\ \hline
    $\alpha=0.5$         &  0.427  &  0.911  &  0.527  &  3.003  &  3.530 \\ \hline
    $\alpha=1.0$ (PF)    &  0.535  &  0.682  &  0.668  &  2.564  &  3.232 \\ \hline
    $\alpha=2.0$         &  0.620  &  0.435  &  0.822  &  1.867  &  2.689 \\ \hline
    $\alpha=\infty$ (MM) &  0.691  &  0.204  &  0.977  &  0.977  &  1.954 \\ \hline
    \end{tabular}
    \caption{Allocations and surpluses for the two-user example.}
    \label{table:simple-example}
\end{table}

\subsection{Price of Fairness and Efficiency}
This section demonstrates how the PoF and PoE scale with the number of users. We run 100 experiments for various number of users. Each experiment has users with quadratic utilities $U_i(x_i)=-\frac{1}{2}a_ix_i^2 + b_ix_i$ with $a_i\sim 1+\mathrm{Unif}(0,1)$ and $b_i\sim 1 + 10\cdot(a_i+1) + 10\cdot\mathrm{Unif}(0,1)$.\footnote{To avoid experiments with small feasible regions, we chose $b_i$ to be large relative to $a_i$. This is because the constraints $x_i \ge 0$ and $\,s_i = -\frac{1}{2}a_ix_i^2 + b_ix_i - lx_i\ge 0$ imply that $0 \le x_i \le 2(b_i-l)/a_i$.} Fig.~\ref{fig:pof-poe-vs-n_users} and Fig.~\ref{fig:pof-poe-bar-chart} both show that the PoF and PoE increase as the number of users increases. As the PoF increases, the system becomes less efficient compared to the socially optimal solution. However, it is important to note that the PoF does not converge to 1, even as the number of users $N$ grows large. This implies that the efficiency loss due to fairness considerations remains bounded. Similarly, an increasing PoE indicates that the system becomes less equitable as the number of users increases. For $\alpha \neq 0$, the PoE also does not approach 1, suggesting that the fairness loss compared to the max-min fair solution is limited, even as the number of users increases.


\begin{figure}[ht] 
    \centering\includegraphics[width=\columnwidth]{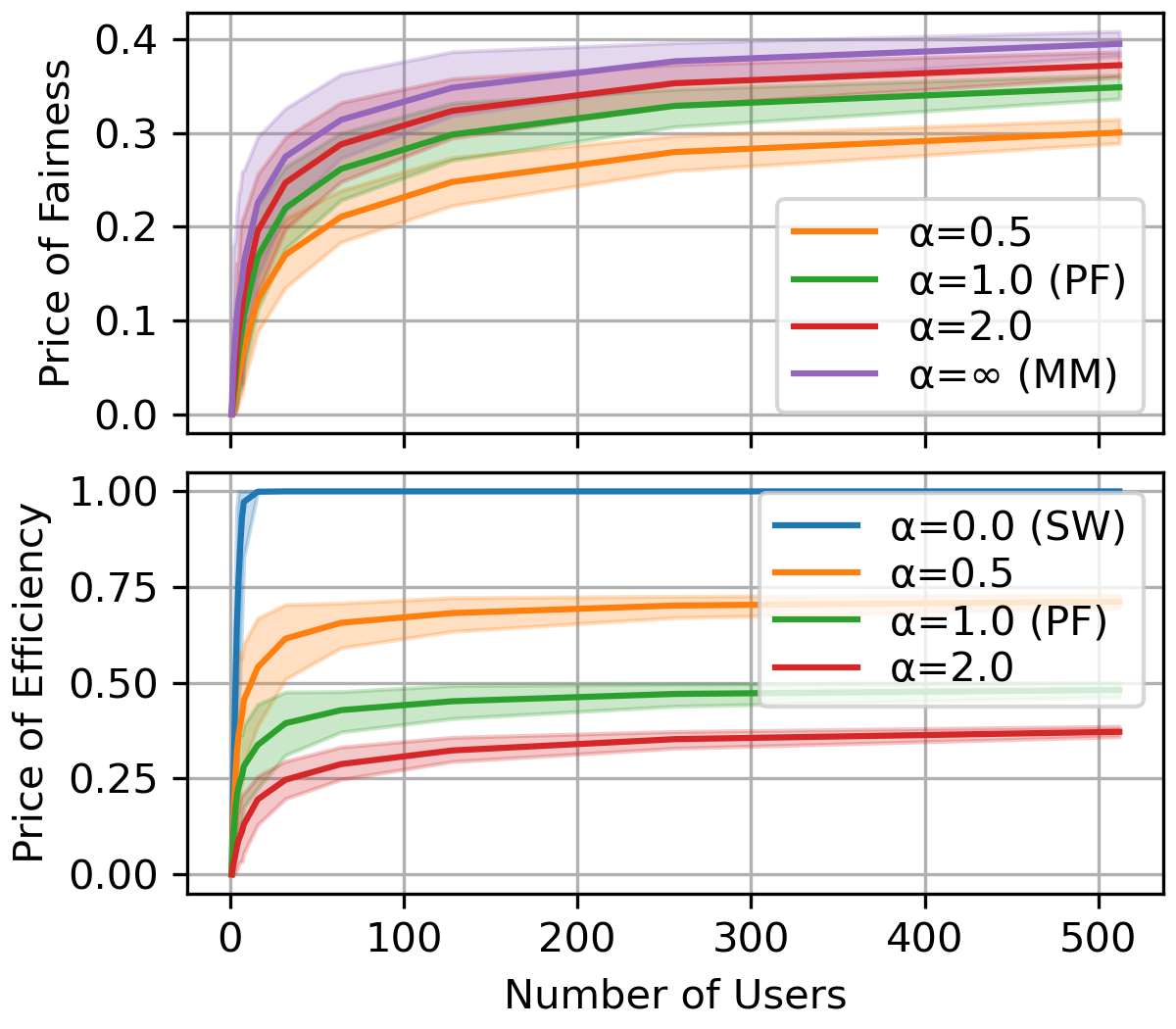}
    \caption{This graph plots the PoF and PoE for various $\alpha$-fairness criteria as a function of the number of users. The shaded areas represent the 90\% confidence interval (from the 5th to the 95th percentile) for each parameter setting. Fairness parameters closer to the socially optimal ($\alpha=0.0$) tend to have a lower PoF and higher PoE. On the other hand, fairness parameters closer to the max-min solution ($\alpha=\infty$) tend to have higher PoF and lower PoE}
    \label{fig:pof-poe-vs-n_users}
\end{figure}

\begin{figure}[ht] 
    \centering\includegraphics[width=\columnwidth]{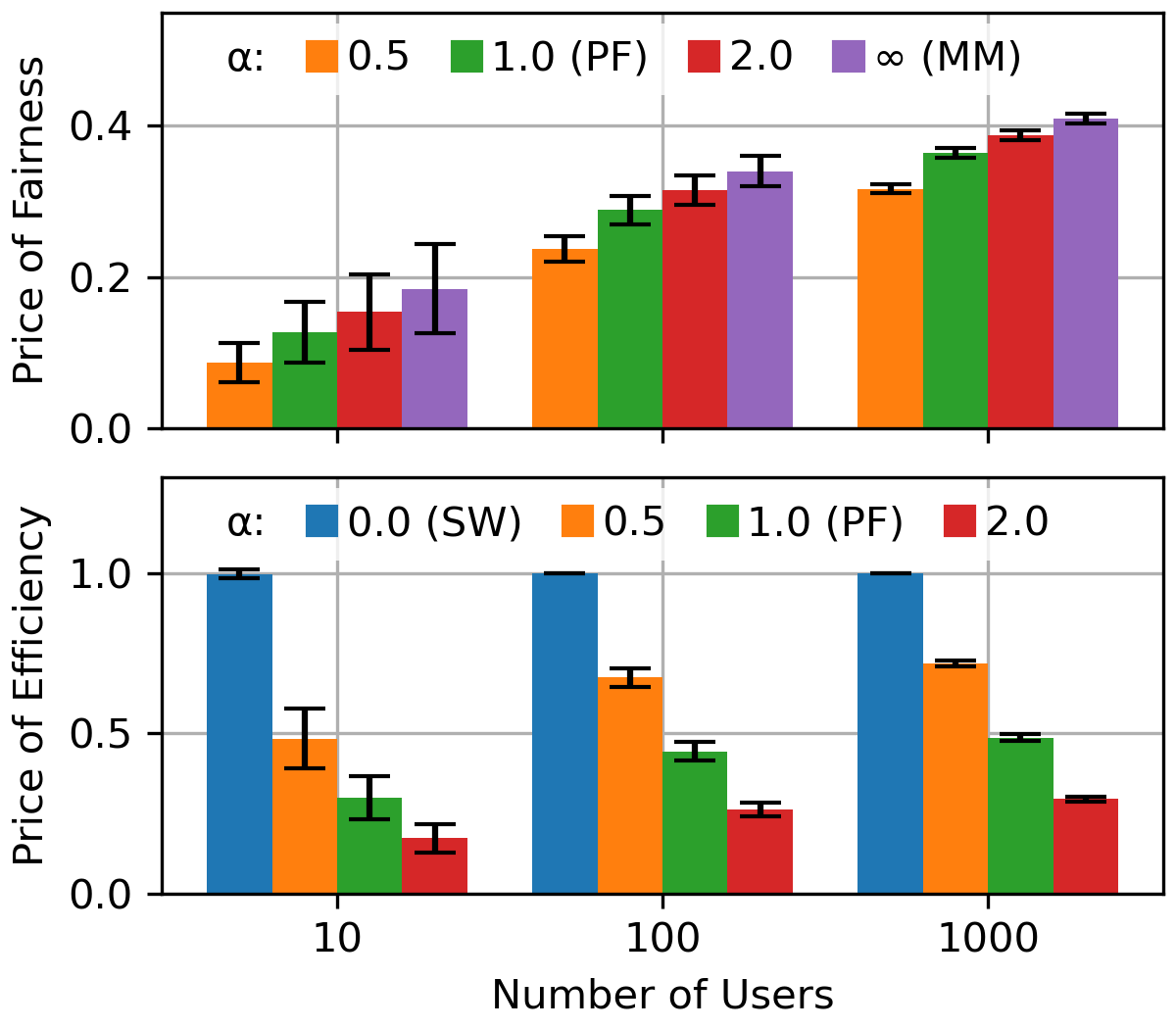}
    \caption{This graph plots the PoF and PoE for various $\alpha$-fairness criteria and number of users. Each bar represents the mean value of PoF/PoE for a specific number of users, with error bars indicating the standard deviation. Fairness parameters closer to the socially optimal ($\alpha=0.0$) tend to have a lower PoF and higher PoE. On the other hand, fairness parameters closer to the max-min solution ($\alpha=\infty$) tend to have higher PoF and lower PoE.}
    \label{fig:pof-poe-bar-chart}
\end{figure}

\subsection{Two-class Example: How Fair Objectives Help}
We now provide a simple example that splits users into two classes and demonstrates that the social welfare solution produces a large asymmetry in the allocation while the proportional fairness scheme produces allocations that are less one-sided.

To define the classes, first suppose energy was free. For quadratic utilities $U_i(x_i) = -\frac{1}{2}a_ix_i^2 + b_ix_i$, user $i$ would want to consume $x_i = b_i/a_i$ energy (set the derivative of $U_i(x_i)$ to zero and solve for $x_i$). In this example, we assume that all users have the same desired consumption when energy is free $\bar{x}=b_i/a_i$. For a non-zero price $p$, user $i$ would want to consume energy to maximize their surplus $s_i(x_i) = -\frac{1}{2}a_ix_i^2 + b_ix_i -p x_i$ or equivalently $x_i = (b_i - p) / a_i = \bar{x} - p/a_i$. Thus, the larger $a_i$ is, the more user $i$ would prefer not to deviate from $\bar{x}$. For the first class, we sample $a^{(1)}_i \sim \mathrm{Unif}(1, 2)$ and for the second class, we sample $a^{(2)}_i \sim \mathrm{Unif}(3, 4)$. 

We run 1000 experiments with 10 users in each class and compare the distribution of allocations and surpluses under the social welfare (SW) solution and the proportionally fair (PF) solution. In Fig.~\ref{fig:two-class-comparison}, we see that under the SW solution, the users in Class 1 receive almost no resources and have close to zero surpluses while the users in Class 2 receive most of the resources and large surpluses. On the other hand, the PF solution gives almost equal resources to users in both classes and the difference between the surpluses in Class 1 and Class 2 is reduced. 

We refer to the gain in allocation/surplus as how much a user's allocation/surplus changed when going from the SW solution to the PF solution. In Fig.~\ref{fig:two-class-comparison}, we see the users in Class 1 almost always benefit from moving from the SW solution to the PF solution. Some users in Class 2 also benefit from the PF solution; however, many of the users in Class 2 receive smaller allocations and lower surpluses under the PF solution.

\begin{figure}[ht] 
    \centering\includegraphics[width=\columnwidth]{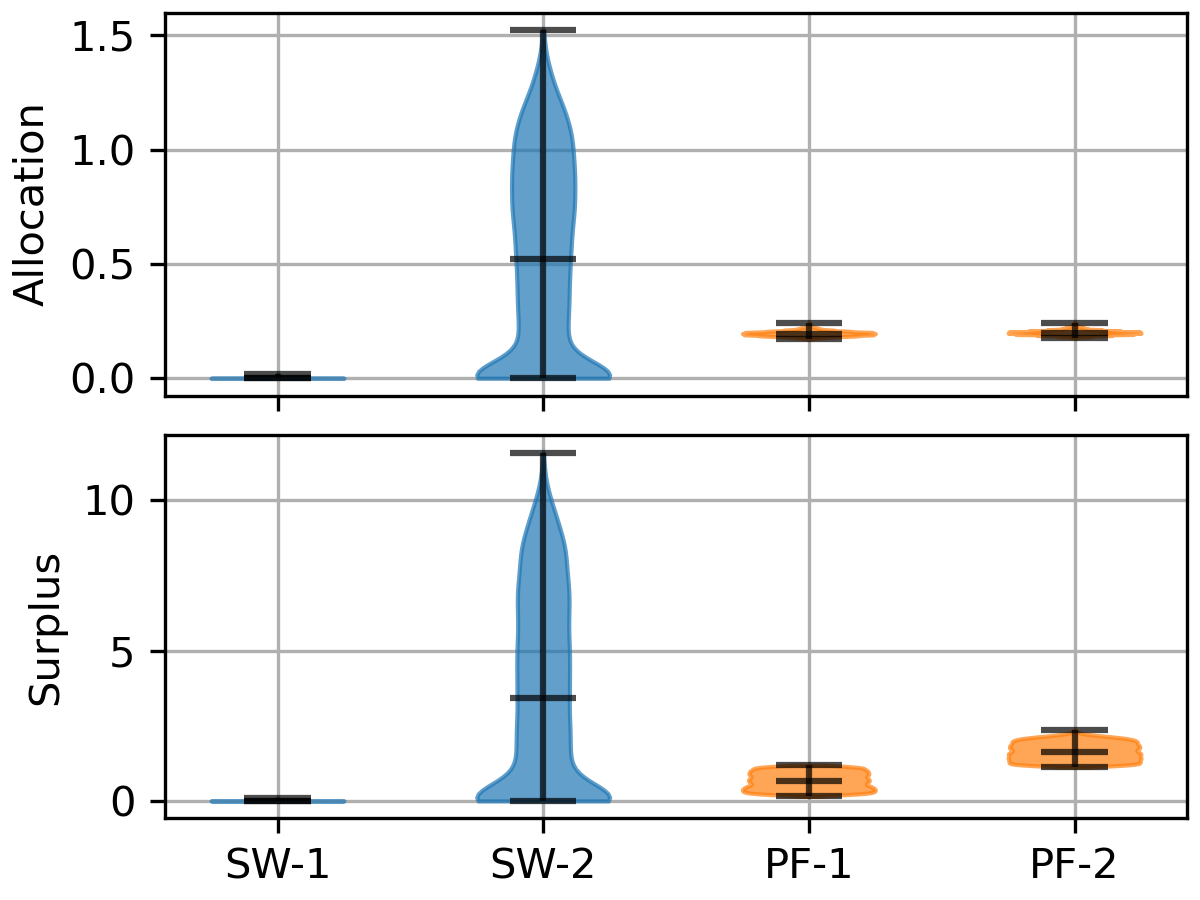}
    \caption{These plots compare the distribution of allocations (top) and surpluses (bottom) under the social welfare solution (SW) and the proportional fairness solution (PF) for Class 1 in (blue) and Class 2 (orange). The probability densities are shown in the shaded regions and the black lines indicate the minimum, median, and values.}
    \label{fig:two-class-comparison}
\end{figure}

\begin{figure}[ht] 
    \centering\includegraphics[width=\columnwidth]{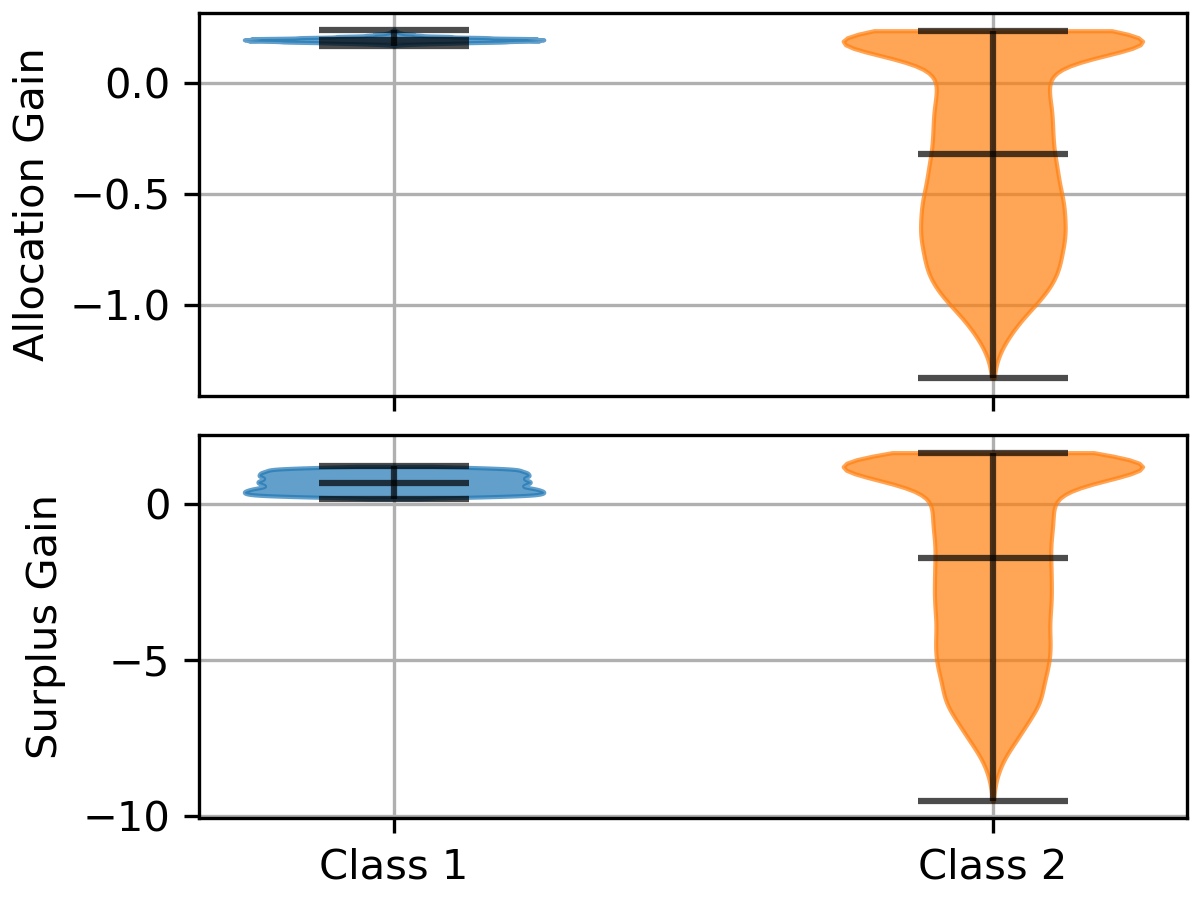}
    \caption{These plots illustrate the gains in allocation (top) and surplus (bottom) when switching from the social welfare solution to the proportional fairness solutions for Class 1 in (blue) and Class 2 (orange). Positive values indicate that the proportional fairness solution provides higher allocations or surpluses compared to the social welfare solution, while negative values show the opposite. The probability densities are shown in the shaded regions and the black lines indicate the minimum, median, and values.}
    \label{fig:two-class-gain}
\end{figure}

\addtolength{\textheight}{-3cm}   


\section{Conclusion and Future work}\label{sec:conclusion-future-work}
In this paper, we formalized the problem of fair energy resource allocation in the context of distributed energy resources (DERs) and energy aggregators. We generalized the resource allocation problem to involve jointly optimizing the total resources to allocate and the allocation itself. By doing so, we provided a principled framework that allows aggregators to explore the trade-offs between efficiency and fairness by tracing out a portion of the Pareto front. The theoretical results, numerical simulations, and analysis presented in this paper demonstrate the effectiveness of the proposed approach in achieving fair energy resource allocation.

Our work opens up several avenues for future research. In this work, we assume the aggregator knows the users' utility functions, which may be unrealistic in some scenarios. Future research could focus on developing fair resource allocation schemes for cases where the aggregator has only partial or no knowledge of each user's utility. Additionally, applying our framework to real-world datasets and exploring methods to learn utility functions from historical data could provide valuable insights and improve the practicality of our approach. Furthermore, investigating decentralized algorithms for solving the fair energy resource allocation problem could lead to more scalable and privacy-preserving solutions.






\bibliographystyle{IEEEtran} 
\bibliography{Reference}






\end{document}